\documentclass[11pt]{article}

\usepackage[bibliosources=refs.bib]{pomegranate}

\DeclareOperator{\Cov}
\DeclareOperator{\Var}
\DeclareOperator{\diag}
\DeclareOperator{\Unif}
\newcommand{\cD}{\mathcal D}
\newcommand{\cG}{\mathcal G}
\newcommand{\relint}{\operatorname{relint}}

\title{On Rounding on the Hypersimplex}
\author[1]{Nima Anari}
\author[1]{Alireza Haqi}
\author[1]{Eric Ma}
\affil[1]{Stanford University\\
\texttt{\{anari,ahaqi,eryma\}@stanford.edu}}
\date{}

\begin{document}

\maketitle

\begin{abstract}
We study correlated rounding on the hypersimplex, the base polytope of the
uniform matroid.  For each point \(x\) in the hypersimplex, the goal is to
sample a \(k\)-subset \(A(x)\) with marginals \(x\), while coupling the samples
for all choices of \(x\) so that nearby inputs produce nearby sets.  We give
conditional constant-stretch results for the natural maximum-entropy
sequential scheme, based on a conjectured spectral property of the covariance
matrix of the maximum-entropy distribution over \(k\)-subsets; this conjecture
has been extensively tested numerically.  Under this property, the scheme samples the
maximum-entropy \(k\)-subset distribution with prescribed marginals using a
common random ordering and common uniform thresholds.  For every
\(x,y\in[0,1]^n\) with
\(\sum_i x_i=\sum_i y_i=k\), it satisfies
\[
  \E*{\abs{A(x)\triangle A(y)}} \le 6\norm{x-y}_1.
\]
Thus, conditional on the spectral hypothesis, the previous \(O(\log k)\) bound
for hypersimplex correlated rounding would improve to a constant and the open
question raised by Naor, Raju, Shetty, Srinivasan, Valieva, and Wajc would have
a positive answer.  By adding dummy coordinates, the same conditional result
gives stretch at most \(12\) for the at-most-\(k\) polytope.
\end{abstract}

\section{Introduction}

Randomized rounding is a central tool for converting fractional solutions into
discrete objects.  Given a point in a polytope, one samples an integral point
whose coordinates have the prescribed expectations.  This lets an algorithm
reason about a convex relaxation and then return a feasible discrete solution.
Many rounding methods also preserve hard constraints, such as a cardinality or
matroid-base constraint, and give negative-dependence or concentration
guarantees
\cite{AgeevSviridenko2004,GandhiKhullerParthasarathySrinivasan2006,
ChekuriVondrakZenklusen2010}.

Correlated rounding strengthens this pointwise view.  Instead of choosing a
separate rounding distribution for each fractional point, it asks for a single
random map that rounds all fractional points at once.  The map should preserve
marginals at every input, and it should be stable: nearby fractional points
should round to nearby discrete objects.  Such consistency is important in
online, dynamic, and multi-scenario settings, where an algorithm may encounter
many nearby fractional solutions and the rounded solution should not be
resampled almost independently after a small perturbation.

Let
\[
  \Delta_{n,k}^{=}
  =
  \set*{x\in[0,1]^n \given \sum_{i=1}^n x_i=k}
\]
be the \(k\)th hypersimplex, equivalently the base polytope of the rank-\(k\)
uniform matroid.  A correlated rounding scheme for \(\Delta_{n,k}^{=}\) is a
random map \(x\mapsto A(x)\), where \(A(x)\subseteq[n]\) has size \(k\) and
\(\P{i\in A(x)}=x_i\) for every \(i\).  Its stretch is the smallest
\(\alpha\) for which
\[
  \E*{\abs{A(x)\triangle A(y)}}\le \alpha\norm{x-y}_1
  \qquad
  \forall x,y\in\Delta_{n,k}^{=}.
\]
Thus stretch is the Lipschitz constant of the random rounding map, where
distance between rounded sets is measured by symmetric difference.

The case \(k=1\) is correlated sampling from the probability simplex.  In that
case the goal is to sample from nearby distributions with as much overlap as
possible, and the problem has been studied since the statistical sampling work
of Keyfitz \cite{Keyfitz1951}.  In theoretical computer science, simplex
correlated sampling is a useful primitive in approximation algorithms,
parallel repetition, and locality-sensitive hashing
\cite{Broder1997,KleinbergTardos2002,Holenstein2007}.

For \(k>1\), the hypersimplex captures the simplest cardinality constraint:
round to exactly \(k\) selected elements while preserving the individual
marginals.  This is the uniform-matroid version of dependent rounding.  The
output must always have exactly \(k\) elements, and the same randomness must
work coherently for every input.  Chen, Di Castro, Karnin, Lewin-Eytan, Naor,
and Schwartz
\cite{ChenDiCastroKarninLewinEytanNaorSchwartz2017} introduced correlated
rounding for uniform matroids and obtained \(O(\log n)\) stretch.  Naor, Raju,
Shetty, Srinivasan, Valieva, and Wajc
\cite{NaorRajuShettySrinivasanValievaWajc2026} recently improved this to
\(O(\log k)\), independent of the ambient dimension, with efficient
implementations and applications to paging, metric multi-labeling, and
reallocation.  They also identified the remaining barrier: after explaining
that a better dimension-dependent bound would feed through their recurrence,
they asked whether any improvement beyond logarithmic stretch is possible, and
whether \(\Theta(\log k)\) is optimal.  We prove a constant bound for the exact
hypersimplex conditional on the following covariance spectral hypothesis, stated
precisely in \cref{conj:covariance-spectral}: for every covariance matrix \(C\)
of an external-field \(r\)-subset measure arising in the sampler, if
\(D=\diag(C_{ii})\), then \(D^{-1/2}CD^{-1/2}\) has one zero eigenvalue and all
other eigenvalues in \([1,2]\).

The algorithm is simple.  Given \(x\in\Delta_{n,k}^{=}\), let \(\mu_x\) be the
maximum-entropy distribution on \(k\)-subsets with marginals \(x\).  For
relative-interior points, this is the external-field measure
\[
  \mu_x(S)
  =
  \frac{\prod_{i\in S}\lambda_i}{e_k(\lambda)}
  \qquad (\abs{S}=k),
\]
where \(\lambda_i>0\) and \(e_k\) is the elementary symmetric polynomial of
degree \(k\).  This is the same distribution obtained by taking independent
weighted Bernoulli variables and conditioning on selecting exactly \(k\)
coordinates.  The maximum-entropy distribution is the natural choice here: it
imposes no structure beyond the one-coordinate marginals and the
exact-cardinality constraint.

To sample \(\mu_x\), the algorithm chooses a uniformly random permutation
\(\pi\) of \([n]\) and independent uniforms \(u_1,\ldots,u_n\).  It scans
coordinates in the order \(\pi\).  At each step it computes the conditional
marginal that the current coordinate belongs to the remaining set, and compares
that marginal to the corresponding uniform.  This is sequential
inverse-transform sampling.  For correlated rounding, the same \(\pi\) and the
same uniforms are used for every input \(x\).

The analysis must control stability under conditioning.  After some
coordinates have been exposed, a small change in \(x\) changes the external
field and hence all later conditional marginals.  These changes could, in
principle, accumulate over the scan.  The conditional theorem says that, under
the covariance spectral hypothesis, the maximum-entropy law and a random order
make this accumulation uniformly bounded.

\paragraph{Status of the main claim.}
The original version of this manuscript claimed an unconditional constant
bound.  That claim relied on the covariance spectral statement now isolated as
\cref{conj:covariance-spectral}.  The proof previously given for that
statement was incorrect: it attempted to obtain the normalized spectral bound
from a Hodge--Riemann/inertia argument, but that argument does not establish
the normalized spectrum assertion needed in the variance-drop step.  The
present version withdraws the unconditional claim and records the remaining
argument as a conditional theorem.

\begin{theorem}[Conditional exact hypersimplex]\label{thm:main}
Assume \cref{conj:covariance-spectral} holds for every external-field
\(r\)-subset measure that can arise during the sequential sampler.  Then, for
every \(1\le k\le n\), the maximum-entropy sequential rounding scheme described
above is a correlated rounding scheme for \(\Delta_{n,k}^{=}\) with stretch at
most \(6\).  That is, for every \(x,y\in\Delta_{n,k}^{=}\),
\[
  \E*{\abs{A(x)\triangle A(y)}}\le 6\norm{x-y}_1.
\]
\end{theorem}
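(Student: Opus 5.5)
Correctness of marginals is immediate: sequential inverse-transform sampling with the exact conditional marginals of \(\mu_x\) produces a sample from \(\mu_x\), whose one-coordinate marginals are \(x\). For boundary points, coordinates with \(x_i\in\{0,1\}\) are forced and are handled by a limiting argument, so we may assume \(x,y\) lie in the relative interior. For the stretch bound, we first reduce to nearby pairs. Both sides of the inequality are sums along a path, and the left side obeys the triangle inequality. So it suffices to bound the derivative along the segment \(x_t=x+t(y-x)\). Concretely, we will show that for every direction \(v\) with \(\sum_i v_i=0\),
\[
  \limsup_{t\to 0}\frac{1}{t}\,\E*{\abs{A(x)\triangle A(x+tv)}}\le 6\norm{v}_1 .
\]
Integrating along the segment then gives the theorem. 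Local analyticity of the external field \(\lambda(x)\) in \(x\) makes this reduction rigorous.

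Fix \(\pi\) and the thresholds. Couple the two runs until their first disagreement. At step \(j\), with common history \(h\), the coordinate \(\pi(j)\) is accepted or rejected according to whether \(u_{\pi(j)}\) lies below the conditional marginal \(p_j(x;h)\). Hence the runs diverge at step \(j\) with probability \(\abs{p_j(x;h)-p_j(x+tv;h)}\approx t\abs{\partial_v p_j(x;h)}\). After divergence, we bound the expected symmetric difference of the two final sets by a constant, using the fact that both runs are then exchangeable conditioned samplers differing in one decision. This follows from a negative-correlation (strong Rayleigh) estimate: flipping one coordinate of a conditioned external-field measure changes the expected symmetric difference of the remainder by at most \(1\). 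The total divergence cost per unit \(t\) is therefore at most \(2\sum_j\E{\abs{\partial_v p_j}}\). It remains to bound \(\E*{\sum_j\abs{\partial_v p_j(x;h_j)}}\le 3\norm{v}_1\), with the expectation over the random order and the history.

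To compute \(\partial_v p_j\), note that conditioning on \(h\) leaves an external-field \(r\)-subset measure on the remaining coordinates with the same \(\lambda\) restricted there. The change in \(\log\lambda\) induced by \(v\) is \(C^{+}v\) for the covariance \(C\) of \(\mu_x\), and the conditional marginal then moves by \(C_h(C^{+}v)\) restricted to the current coordinate, where \(C_h\) is the conditioned covariance. This is where \cref{conj:covariance-spectral} enters. It gives \(D^{-1/2}CD^{-1/2}\succeq I\) on the complement of the kernel, so \(\abs{(C^{+}v)_i}\) is controlled in a weighted \(\ell_2\) sense by \(v_i/\sqrt{x_i(1-x_i)}\). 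It also gives \(\preceq 2I\) for \(C_h\), so the conditioned response is at most twice the diagonal response.

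The main obstacle is converting these \(\ell_2\) spectral bounds into an \(\ell_1\) bound summed over the random scan. My first attempt would be a potential (variance-drop) argument. Define \(\Phi_h=\sum_{i\text{ unexposed}}\abs{\delta_i(h)}\), where \(\delta(h)\) is the induced perturbation of the conditional marginals. We would show that in expectation over the next random coordinate, \(\E{\Phi_{h'}}\le\Phi_h-\tfrac13\E{\abs{\partial_v p_j}}\). The random order makes the exposed coordinate a uniform unexposed one, and the lower eigenvalue bound \(1\) ensures that exposing it removes its share of the perturbation without amplifying the rest by more than the factor allowed by the upper bound \(2\). Telescoping from \(\Phi_\emptyset=\norm{v}_1\) gives the factor \(3\), and hence the stretch \(6\). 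The delicate step is this one-step inequality. I would verify it first on the diagonal-dominant case, and then use the normalized spectrum in \([1,2]\) to absorb the off-diagonal terms.
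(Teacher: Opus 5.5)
\textbf{There is a genuine gap in your dynamic step.} Your reduction to $\E{\sum_j\abs{\partial_v p_j}}\le 3\norm{v}_1$ is exactly the paper's $\mathsf V(h)\le 3\norm{Ch}_1$ with $h=C^{+}v$. But the one-step inequality you plan to prove it with is false, even when the spectral hypothesis holds at every state. The next coordinate is uniform, so $\E{\abs{\partial_v p_j}}=\Phi/m$. You therefore need $\E{\Phi'}\le(1-\frac{1}{3m})\Phi$ with $\Phi=\norm{C_{R,r}h_R}_1$.

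Here is a counterexample. Take $k=1$, $n=3$, $x=(1-2\epsilon,\epsilon,\epsilon)$, $h=(0,-1,1)$. For $k=1$ the normalized adjacency is a rank-one positive semidefinite matrix minus a positive diagonal, so the hypothesis holds at every state. Then $Ch=(0,-\epsilon,\epsilon)$ and $\Phi=2\epsilon$.
\begin{itemize}
\item If the heavy coordinate is revealed and rejected (probability $2\epsilon/3$), the remaining measure is uniform on two points and $\Phi'=1$.
\item If a light coordinate is revealed and rejected, $\Phi'=2\epsilon(1-2\epsilon)/(1-\epsilon)^2$.
\end{itemize}
Hence $\E{\Phi'}=2\epsilon(1-\frac{2\epsilon}{3(1-\epsilon)})$. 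This is essentially no decrease, whereas you need a drop of $\Phi/9$. No positive constant in place of $\frac13$ survives $\epsilon\to0$. The $\ell_1$ norm of the marginal perturbation is blown up by renormalization after rare outcomes. The $\ell_2$ facts you list (weighted control of $C^{+}v$, the upper bound $2$ for $C_h$) do not yield an $\ell_1$ contraction.

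\textbf{How the paper gets the constant.} It changes the potential and splits the constant $3$ as $2\cdot\frac32$.
\begin{itemize}
\item \emph{Dynamic part.} By coarea, $\norm{C_{R,r}h_R}_1\le 2\sum_{i<j}c_{ij}\abs{h_i-h_j}=2\int\Var(\abs{S\cap A_s\cap R})\,ds$ with $A_s=\set{i\given h_i\ge s}$. Each integrand is the variance of a linear statistic. So the law of total variance, together with $CD^{-1}C\succeq C$ (the lower spectral bound), contracts it by a factor at most $1-\frac1m$ per uniformly random reveal. Telescoping gives $\mathsf V(h)\le 2\sum_{i<j}c_{ij}\abs{h_i-h_j}$ for the initial $C$.
\item \emph{Static part.} The factor $\frac32$ comes from a separate $L^1$ electrical bound: any $w$ with $Cw=\1_a-\1_b$ satisfies $\sum_{i<j}c_{ij}\abs{w_i-w_j}\le 3$. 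This is proved by writing $\pi_i\pi_j-\pi_iP_{ij}\succeq0$ as a Gram matrix of vectors $\pi_i r_i$ with unit $r_i$, again using only the lower spectral bound. One then decomposes $v=Ch$ into elementary transfers.
\end{itemize}
Neither ingredient appears in your plan.

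\textbf{A smaller issue: the post-divergence cost.} This needs more than a strong-Rayleigh estimate, because the coupling is fixed (common order and uniforms), not chosen. What makes the two continuations nested, and hence differ in exactly one further element, is quota monotonicity $p_i(R,r)\le p_i(R,r+1)$, which follows from Newton's inequalities. You must also check that the later $O(t)$ drift of the perturbed field contributes only $O(t^2)$.
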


The conditional theorem extends to the more common at-most-\(k\) polytope
\[
  \Delta_{n,k}^{\le}
  =
  \set*{x\in[0,1]^n\given \sum_{i=1}^n x_i\le k}
\]
by adding dummy coordinates that absorb the slack and then deleting those
coordinates after rounding.

\begin{corollary}[Conditional at most \(k\) elements]\label{cor:at-most}
Under the hypothesis of \cref{thm:main}, there is a correlated rounding scheme
for \(\Delta_{n,k}^{\le}\) which outputs a subset of \([n]\) of size at most
\(k\), preserves all marginals, and has stretch at most \(12\):
\[
  \E*{\abs{A(x)\triangle A(y)}}\le 12\norm{x-y}_1
  \qquad
  \forall x,y\in\Delta_{n,k}^{\le}.
\]
\end{corollary}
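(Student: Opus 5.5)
We reduce \cref{cor:at-most} to \cref{thm:main} by padding with \(k\) dummy coordinates. Given \(x\in\Delta_{n,k}^{\le}\), let \(s(x)=k-\sum_{i=1}^n x_i\in[0,k]\) denote the slack. We lift \(x\) to a point \(\hat x\in[0,1]^{n+k}\) by keeping the first \(n\) coordinates and placing \(\hat x_{n+j}=s(x)/k\) on each of the \(k\) dummy coordinates \(j=1,\ldots,k\). Each dummy value lies in \([0,1]\) and the coordinates sum to \(k\), so \(\hat x\in\Delta_{n+k,k}^{=}\). Apply the scheme of \cref{thm:main} on \([n+k]\) with rank \(k\), using a single shared random permutation and shared uniforms for all inputs. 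Output \(A(x)=\hat A(\hat x)\cap[n]\).

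Two properties are immediate from \cref{thm:main}. First, \(\abs{A(x)}\le\abs{\hat A(\hat x)}=k\). Second, for \(i\le n\) we have \(\P{i\in A(x)}=\P{i\in\hat A(\hat x)}=\hat x_i=x_i\), so all marginals are preserved. The scheme is a single random map because the randomness of the padded sampler is shared across inputs.

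For stretch, intersecting with \([n]\) can only shrink a symmetric difference: \(\abs{A(x)\triangle A(y)}\le\abs{\hat A(\hat x)\triangle\hat A(\hat y)}\). By \cref{thm:main},
\[
  \E*{\abs{A(x)\triangle A(y)}}\le 6\norm{\hat x-\hat y}_1 .
\]
It remains to bound the lifted distance. We have
\[
  \norm{\hat x-\hat y}_1=\norm{x-y}_1+k\cdot\frac{\abs{s(x)-s(y)}}{k}=\norm{x-y}_1+\abs*{\sum_i (x_i-y_i)}\le 2\norm{x-y}_1 ,
\]
which gives stretch at most \(12\).

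The one point needing care is that the hypothesis of \cref{thm:main} must hold for the padded instance. \Cref{conj:covariance-spectral} is assumed for every external-field measure arising in the sampler, on any ground set. The padded instance is just another instance of the hypersimplex scheme, so the assumption covers it. Boundary inputs need no separate treatment: \cref{thm:main} is stated for every point of \(\Delta_{n+k,k}^{=}\), including \(s=0\) (dummies identically \(0\)) and \(s=k\) (dummies identically \(1\)). The only real design decision is spreading the slack evenly over \(k\) dummies, which keeps every dummy coordinate in \([0,1]\) while keeping the lifting \(1\)-Lipschitz in the slack.
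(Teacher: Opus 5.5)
Your proof is correct and follows the paper's argument: pad with \(k\) dummy coordinates that carry the slack, apply \cref{thm:main} on \(\Delta_{n+k,k}^{=}\), delete the dummies, and use \(\abs{\sigma(x)-\sigma(y)}\le\norm{x-y}_1\) to get \(12\). The only difference is in how the slack is placed on the dummies. You spread it evenly as \(s/k\) per dummy, while the paper uses the monotone prefix vector (ones, then the fractional part, then zeros). Both encodings are exact \(\ell_1\)-isometries in the slack because all dummy coordinates move in the same direction.
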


\paragraph{Proof idea.}
The proof below is conditional only on \cref{conj:covariance-spectral}; all
other reductions are proved from that input.  It studies infinitesimal
perturbations of the input.  Write
\(\theta_i=\log \lambda_i\).  If the parameter \(\theta\) moves with velocity
\(h\), then the marginal vector moves with velocity \(Ch\), where \(C\) is the
covariance matrix of \(\mu_x\).  This is the standard derivative formula for
an exponential family.  The same formula holds after conditioning on any
partial reveal: if \(R\) is the set of unrevealed coordinates and \(r\) is the
residual quota, the velocity of the current conditional marginal vector is
\(C_{R,r}h_R\).  Because the next coordinate is uniform in \(R\), the
instantaneous rate at which the two coupled samplers first disagree is
\[
  \E*{
    \sum_\ell \frac{1}{\abs{R_\ell}}
    \norm{C_{R_\ell,r_\ell}h_{R_\ell}}_1
  }.
\]

There are two steps.  First, for a fixed external field, a first disagreement
costs exactly \(2\) in final symmetric difference.  Indeed, after one run takes
the current coordinate and the other does not, the remaining quotas differ by
one.  Conditional marginals for external-field \(k\)-subset measures are
monotone in the quota, so the same future uniforms couple the two continuations
monotonically; their future sets differ in exactly one coordinate.  In the
infinitesimal comparison between \(x\) and \(x+dx\), the change in the future
external field after the first crossing affects only lower-order terms.

Second, we bound the first-disagreement rate.  Covariance matrices of
external-field \(k\)-subset measures are graph Laplacians: negative
correlations give nonnegative edge weights \(c_{ij}=-C_{ij}\), and the identity
\(\sum_i\1_i=k\) gives \(C\1=0\).  A coarea formula turns
\(\norm{C_{R,r}h_R}_1\) into an integral of cut variances.  A variance-drop
lemma says that revealing a uniformly random coordinate decreases each such
variance by at least a \(1/\abs{R}\) fraction.  This makes the whole
random-order sum telescope down to a static quantity
\[
  2\sum_{i<j} c_{ij}\abs{h_i-h_j},
  \qquad
  c_{ij}=-C_{ij}.
\]
The final ingredient is a static electrical bound.  For distinct \(a,b\), any
solution \(v\) of \(Cv=\1_a-\1_b\) satisfies
\[
  \sum_{i<j}c_{ij}\abs{v_i-v_j}\le 3.
\]
By decomposing an arbitrary zero-sum vector \(Ch\) into elementary transfers,
the static energy is at most \((3/2)\norm{Ch}_1\).  Hence the infinitesimal
first-disagreement rate is at most \(3\norm{Ch}_1\), and the final
infinitesimal stretch is at most \(6\norm{Ch}_1\).  Since \(Ch=\dot x\),
integrating along the line segment from \(x\) to \(y\) proves
\cref{thm:main}.

\paragraph{Organization.}
\Cref{sec:prelim} fixes notation, recalls the external-field measures used by
the sampler, and states the covariance spectral hypothesis.
\Cref{sec:algorithm} states the sequential sampler and proves exactness.
\Cref{sec:infinitesimal} converts stretch to an infinitesimal covariance
bound.  \Cref{sec:variance-drop} proves the dynamic variance-drop estimate
under the spectral hypothesis.  \Cref{sec:static} proves the static electrical
estimate.  \Cref{sec:proof-main} combines the ingredients and proves the
conditional main theorem and the at-most-\(k\) corollary.

\section{Preliminaries}\label{sec:prelim}

\paragraph{Notation.}
For \(R\subseteq[n]\), write \(\lambda_R=(\lambda_i)_{i\in R}\), and let
\[
  e_r(\lambda_R)
  =
  \sum_{\substack{T\subseteq R\\ \abs{T}=r}}
  \prod_{i\in T}\lambda_i
\]
be the elementary symmetric polynomial of degree \(r\), with
\(e_0(\lambda_R)=1\).  We write \(\1_i(S)=\1\{i\in S\}\) for the
\(i\)th coordinate indicator, and also use \(\1_i\) for the corresponding
standard basis vector when it appears in a linear equation.  For
\(A\subseteq R\), \(\1_A\) denotes its indicator vector.  For a vector \(a\),
\(a_R\) denotes its restriction to \(R\).

For \(R\subseteq[n]\), \(0\le r\le\abs{R}\), and positive weights
\(\lambda_R\), define the external-field \(r\)-subset measure
\[
  \mu_{R,r}^{\lambda}(T)
  =
  \frac{\prod_{i\in T}\lambda_i}{e_r(\lambda_R)}
  \qquad
  (T\subseteq R,\ \abs{T}=r).
\]
When the weights are clear, we write \(\mu_{R,r}\).  Its marginal vector is
\[
  p_i(R,r)
  =
  \P_{T\sim\mu_{R,r}}{i\in T}
  =
  \frac{\lambda_i e_{r-1}(\lambda_{R\setminus i})}
       {e_r(\lambda_R)}
  \qquad (i\in R).
\]
Let \(C_{R,r}\) be the covariance matrix of the indicators under
\(\mu_{R,r}\):
\[
  (C_{R,r})_{ij}
  =
  \Cov_{\mu_{R,r}}(\1_i,\1_j).
\]
One can view \(\mu_{R,r}\) as sampling \(r\) elements without replacement with
biases \(\lambda_i\).  Equivalently, take independent Bernoulli variables with
odds proportional to \(\lambda_i\), and condition on exactly \(r\) successes.
This family is closed under conditioning on whether a coordinate is selected,
which is why it is suited to a sequential sampler.

\paragraph{Maximum-entropy measures.}
The maximum-entropy distribution on \(k\)-subsets with marginals \(x\) is the
solution of
\[
  \max\set*{
    -\sum_{\abs{S}=k}\nu(S)\log\nu(S)
    \given
    \nu\in\cD_k,\quad
    \P_{S\sim\nu}{i\in S}=x_i\ \forall i
  },
\]
where \(\cD_k\) is the simplex of probability distributions on \(k\)-subsets.
This is a finite-dimensional convex optimization problem: entropy is strictly
concave on the support, and the constraints are linear.  For
\(x\in\relint(\Delta_{n,k}^{=})\), the optimizer has full support and the
Karush--Kuhn--Tucker conditions give
\[
  \nu(S)\propto \exp\!\parens*{\sum_{i\in S}\theta_i}
  =
  \prod_{i\in S}\lambda_i,
  \qquad
  \lambda_i=e^{\theta_i}>0.
\]
The vector \(\theta\) is unique up to adding the same constant to every
coordinate, since all sets have size \(k\).  Boundary points are obtained by
deleting coordinates with \(x_i=0\), contracting coordinates with \(x_i=1\),
and applying the same construction on the remaining relative interior.

The conditional argument uses the following covariance spectral hypothesis.

\begin{conjecture}[Covariance spectral hypothesis]\label{conj:covariance-spectral}
Let \(C=C_{R,r}\) be the covariance matrix of an external-field \(r\)-subset
measure with \(0<r<\abs{R}\), and assume no coordinate is deterministic.  Then:
\begin{enumerate}
  \item \(C\1=0\);
  \item \(C_{ij}\le0\) for all \(i\ne j\);
  \item if \(D=\diag(C_{ii})\), then
  \[
    B=D^{-1/2}CD^{-1/2}
  \]
  has one zero eigenvalue and all other eigenvalues in \([1,2]\).
\end{enumerate}
\end{conjecture}

The first two items are standard: \(C\1=0\) follows from the deterministic
identity \(\sum_{i\in R}\1_i(T)=r\), and pairwise negative correlation follows
because external-field \(r\)-subset measures are strongly Rayleigh
\cite{BorceaBrandenLiggett2009}.  The unproved part is the normalized spectral
assertion in item 3.  An earlier draft attempted to derive it from a
Hodge--Riemann/inertia argument for the Rayleigh-difference matrix, but that
derivation does not prove the normalized eigenvalue bound stated here.

\(C\) is a graph Laplacian.  We write
\[
  c_{ij}:=-C_{ij}\ge0\quad (i\ne j),
  \qquad
  d_i:=C_{ii}=\sum_{j\ne i}c_{ij}.
\]
The normalized eigenvalue bound in \cref{conj:covariance-spectral} says that
this covariance Laplacian has condition number at most \(2\) on the
degree-normalized subspace orthogonal to the zero eigenvector.  This is the
form used in the variance-drop argument below.

\section{The Sequential Sampler}\label{sec:algorithm}

Fix \(x\in\relint(\Delta_{n,k}^{=})\), and let
\(\mu_x=\mu_{[n],k}^{\lambda}\) be its maximum-entropy external-field measure.
The sampler uses a random permutation \(\pi\) and independent
\(u_i\sim\Unif[0,1]\).  Initially \(R=[n]\), \(r=k\), and \(A=\emptyset\).
When the next coordinate is \(i\in R\), set
\[
  q_i=p_i(R,r)
  =
  \frac{\lambda_i e_{r-1}(\lambda_{R\setminus i})}
       {e_r(\lambda_R)}.
\]
If \(u_i\le q_i\), add \(i\) to \(A\) and replace \(r\) by \(r-1\); otherwise
leave \(A\) unchanged.  In either case remove \(i\) from \(R\), and continue.
Once \(r=0\), all later coordinates are rejected; once \(r=\abs{R}\), all
later coordinates are accepted.
The random order is not needed for exact sampling, but it is important for the
stretch analysis: conditioned on the current state, the next coordinate is
uniform among the unrevealed coordinates.

\begin{lemma}[Exactness]\label{lem:exactness}
For every \(x\in\Delta_{n,k}^{=}\), the sequential sampler outputs a
\(k\)-subset distributed as \(\mu_x\).  In particular,
\(\P{i\in A(x)}=x_i\) for every coordinate \(i\).
\end{lemma}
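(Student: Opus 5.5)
The plan is to show, by induction on the number of revealed coordinates, that for each fixed permutation \(\pi\) the sequential sampler produces the chain-rule factorization of \(\mu_x\). We then average over \(\pi\).

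First treat \(x\in\relint(\Delta_{n,k}^{=})\), so \(\mu_x=\mu_{[n],k}^{\lambda}\) with all \(\lambda_i>0\). The key structural fact is closure under conditioning. For \(R\subseteq[n]\), \(0\le r\le\abs{R}\) and \(i\in R\), conditioning \(\mu_{R,r}^{\lambda}\) on \(\{i\in T\}\) gives the law of \(\{i\}\cup T'\) with \(T'\sim\mu_{R\setminus i,r-1}^{\lambda}\). Conditioning on \(\{i\notin T\}\) gives \(\mu_{R\setminus i,r}^{\lambda}\). Both follow directly from the product form \(\prod_{j\in T}\lambda_j/e_r(\lambda_R)\), since restricting to sets containing (resp.\ avoiding) \(i\) leaves weights proportional to \(\lambda_i\prod_{j\in T'}\lambda_j\) (resp.\ \(\prod_{j\in T}\lambda_j\)). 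The normalizers are \(\lambda_i e_{r-1}(\lambda_{R\setminus i})\) and \(e_r(\lambda_{R\setminus i})\), whose ratios to \(e_r(\lambda_R)\) are exactly \(p_i(R,r)\) and \(1-p_i(R,r)\). This uses the identity \(e_r(\lambda_R)=\lambda_i e_{r-1}(\lambda_{R\setminus i})+e_r(\lambda_{R\setminus i})\).

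Next, fix \(\pi\) and prove by induction on \(\abs{R}\) the following claim: started from state \((R,r)\) with \(0\le r\le\abs{R}\), the sampler (with fresh independent uniforms for the coordinates of \(R\)) outputs a set distributed as \(\mu_{R,r}^{\lambda}\). The base cases are \(r=0\) and \(r=\abs{R}\). There the measure is a point mass on \(\emptyset\) or \(R\), and the stopping rules produce exactly that; these rules also agree with \(q_i=0\) or \(q_i=1\) respectively. For the inductive step, let \(i\) be the next coordinate. Because \(u_i\) is independent of everything used so far, it accepts \(i\) with probability exactly \(q_i=p_i(R,r)\). The later uniforms are independent of \(u_i\), so by the induction hypothesis the continuation from \((R\setminus i,r-1)\) or \((R\setminus i,r)\) is distributed as the corresponding measure. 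Multiplying by the probabilities from the conditioning step reproduces \(\mu_{R,r}^{\lambda}\). Applying the claim at \((R,r)=([n],k)\) shows the output law is \(\mu_x\) for every \(\pi\), hence also after averaging over \(\pi\). Then \(\P{i\in A(x)}=x_i\) holds because the marginals of \(\mu_x\) are \(x\) by the maximum-entropy/KKT characterization.

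For boundary points, let \(Z=\{i:x_i=0\}\) and \(O=\{i:x_i=1\}\). The paper defines \(\mu_x\) as \(O\cup T\) with \(T\) drawn from the relative-interior construction on \([n]\setminus(Z\cup O)\) with quota \(k-\abs{O}\). This can be encoded as weights \(\lambda_i=0\) on \(Z\) and \(\lambda_i=+\infty\) on \(O\), meaning conditional marginals \(q_i=0\) and \(q_i=1\). The sampler then always rejects \(Z\), always accepts \(O\), and on the remaining coordinates runs the interior sampler with the reduced quota. The same induction applies, since accepting a forced coordinate decrements \(r\) consistently. The main obstacle is only bookkeeping: making the definition of \(q_i\) precise on the boundary, where \(e_r\) may vanish, so that the stopping rules and forced coordinates are handled by the same inductive claim. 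Everything else is the elementary symmetric polynomial recursion.
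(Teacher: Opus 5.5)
Your proposal is correct and follows the paper's route. The paper proves exactness the same way: external-field measures are closed under conditioning on a coordinate, so the sequential sampler realizes the chain rule for $\mu_{[n],k}$, and boundary points are handled by the forced-coordinate reduction. Your explicit induction on $|R|$ for a fixed order, the recursion $e_r(\lambda_R)=\lambda_i e_{r-1}(\lambda_{R\setminus i})+e_r(\lambda_{R\setminus i})$, and the check that the stopping rules match $q_i\in\{0,1\}$ fill in details the paper leaves implicit.
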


\begin{proof}
It is enough to consider relative-interior points; boundary points follow by
the forced-coordinate reduction described above.  At a state \((R,r)\), the
conditional distribution of the unrevealed part under \(\mu_{R,r}\) is exactly
\[
  \mu_{R,r}(T)
  =
  \frac{\prod_{i\in T}\lambda_i}{e_r(\lambda_R)}
  \qquad (T\subseteq R,\ \abs{T}=r).
\]
The threshold \(p_i(R,r)\) is the conditional probability that \(i\) belongs to
this remaining set.  Conditioning on the outcome at \(i\), the remaining law is
\(\mu_{R\setminus i,r-1}\) if \(i\) is selected and
\(\mu_{R\setminus i,r}\) otherwise.  Thus the recursive procedure is exactly
the chain rule for the distribution \(\mu_{[n],k}\).
\end{proof}

\begin{lemma}[Quota monotonicity]\label{lem:quota-monotone}
For fixed \(R\), fixed positive \(\lambda_R\), and \(i\in R\), the conditional
marginal \(p_i(R,r)\) is nondecreasing in \(r\).
\end{lemma}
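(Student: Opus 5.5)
We want to show that \(p_i(R,r)=\lambda_i e_{r-1}(\lambda_{R\setminus i})/e_r(\lambda_R)\) is nondecreasing in \(r\). The plan is to reduce this to Newton's inequalities, i.e.\ log-concavity of the elementary symmetric polynomials of positive variables.

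Write \(S=R\setminus i\) and \(a_s=e_s(\lambda_S)\), with the convention \(a_{-1}=0\) and \(a_s=0\) for \(s>\abs{S}\). Since \(e_r(\lambda_R)=a_r+\lambda_i a_{r-1}\), we get
\[
  p_i(R,r)=\frac{\lambda_i a_{r-1}}{a_r+\lambda_i a_{r-1}}
  =\frac{\lambda_i}{\lambda_i+a_r/a_{r-1}}
\]
whenever \(a_{r-1}>0\). So monotonicity in \(r\) is equivalent to the ratio \(t_r:=a_r/a_{r-1}\) being nonincreasing in \(r\), which is the statement \(a_r^2\ge a_{r-1}a_{r+1}\).

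For positive \(\lambda\), log-concavity is the key input. The polynomial \(\prod_{j\in S}(1+\lambda_j z)=\sum_s a_s z^s\) has only real (negative) roots. By Newton's inequalities its coefficients satisfy \(a_s^2\ge a_{s-1}a_{s+1}\binom{m}{s}^2/(\binom{m}{s-1}\binom{m}{s+1})\ge a_{s-1}a_{s+1}\), where \(m=\abs{S}\). Moreover all \(a_s>0\) for \(0\le s\le m\), so the sequence has no internal zeros. An elementary route, if one wants to avoid citing Newton, is induction on \(\abs{S}\): adding one variable \(\lambda\) maps \(a_s\mapsto a_s+\lambda a_{s-1}\), and log-concavity without internal zeros is preserved under convolution with the log-concave sequence \((1,\lambda)\).

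The remaining care is at the ends of the range \(0\le r\le\abs{R}\).

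\begin{itemize}
  \item At \(r=0\), we have \(p_i=0\).
  \item At \(r=\abs{R}\), we have \(a_r=0\), so \(p_i=1\).
  \item In between, \(0<p_i<1\).
\end{itemize}

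Hence the endpoints only extend the monotone sequence by its extreme values, and no step of the argument is a real obstacle.
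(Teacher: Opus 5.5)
Your proof is correct and takes the paper's route: both reduce monotonicity of \(p_i(R,r)\) in \(r\) to Newton's log-concavity \(e_r(\lambda_{R\setminus i})^2\ge e_{r-1}(\lambda_{R\setminus i})\,e_{r+1}(\lambda_{R\setminus i})\). Your decomposition \(e_r(\lambda_R)=a_r+\lambda_i a_{r-1}\) and your explicit treatment of the endpoints \(r=0\) and \(r=\abs{R}\) fill in details the paper leaves implicit.
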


\begin{proof}
For \(0<r<\abs{R}\), the inequality \(p_i(R,r+1)\ge p_i(R,r)\) is equivalent,
after canceling the positive factor \(\lambda_i\), to
\[
  e_r(\lambda_{R\setminus i})^2
  \ge
  e_{r-1}(\lambda_{R\setminus i})e_{r+1}(\lambda_{R\setminus i}).
\]
This is Newton log-concavity for elementary symmetric polynomials with positive
variables, which also follows from the Lorentzian property of elementary
symmetric polynomials \cite{BrandenHuh2020}.
\end{proof}

\begin{lemma}[Cost of the first disagreement]\label{lem:first-disagreement}
Consider two sequential samplers with the same external field and the same
future order and uniforms.  Suppose that, at some coordinate \(i\), one run
selects \(i\) and the other does not, and before that step the two runs have
the same state.  If the two runs are coupled by the common future uniforms
thereafter, then the final sets differ in exactly two coordinates.
\end{lemma}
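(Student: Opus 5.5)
The approach is a monotone coupling by induction on the number of remaining coordinates. After the disagreement at \(i\), both runs have the same unrevealed set \(R'=R\setminus i\). Call the run that selected \(i\) the \emph{low} run, with quota \(r\), and the other the \emph{high} run, with quota \(r+1\), where \(r=r_0-1\) and \(r_0\) is the common quota before step \(i\). Both runs use the same external field restricted to \(R'\), the same order, and the same uniforms. The two final sets are \(A_{\mathrm{lo}}=A_0\cup\{i\}\cup T_{\mathrm{lo}}\) and \(A_{\mathrm{hi}}=A_0\cup T_{\mathrm{hi}}\), where \(A_0\) is the common prefix and \(T_{\mathrm{lo}},T_{\mathrm{hi}}\subseteq R'\) have sizes \(r\) and \(r+1\). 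So it suffices to show \(T_{\mathrm{lo}}\subseteq T_{\mathrm{hi}}\). Then \(T_{\mathrm{hi}}\setminus T_{\mathrm{lo}}\) is a single coordinate \(j\), and \(A_{\mathrm{lo}}\triangle A_{\mathrm{hi}}=\{i,j\}\), giving exactly two coordinates.

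To prove the containment, I track the invariant that at every later step both runs share the same unrevealed set \(R''\), and that their quotas satisfy \(r_{\mathrm{hi}}\in\{r_{\mathrm{lo}},r_{\mathrm{lo}}+1\}\). The invariant holds initially. At the next coordinate \(j\in R''\) there are two cases.

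\begin{itemize}
\item If the quotas are equal, the two runs have identical states from then on. With common uniforms they make identical decisions forever.
\item If \(r_{\mathrm{hi}}=r_{\mathrm{lo}}+1\), then \cref{lem:quota-monotone} gives \(p_j(R'',r_{\mathrm{lo}})\le p_j(R'',r_{\mathrm{lo}}+1)\). Since the uniform \(u_j\) is shared, acceptance by the low run implies acceptance by the high run. The possible outcomes are: both accept, so the gap of \(1\) persists; both reject, so the gap persists; or only the high run accepts, so the quotas become equal. The low run can never accept when the high run rejects.
\end{itemize}

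The boundary conventions need a short check here. When \(r_{\mathrm{lo}}=0\), the low threshold is \(0\). When \(r_{\mathrm{hi}}=\abs{R''}\), the high threshold is \(1\). Both are consistent with monotonicity (extend \(p_j(R'',0)=0\), \(p_j(R'',\abs{R''})=1\)), so the lemma's range \(0<r<\abs{R}\) together with these endpoint values covers every case.

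By induction, at every step the low run's acceptance implies the high run's acceptance. This gives \(T_{\mathrm{lo}}\subseteq T_{\mathrm{hi}}\), and the size count \(\abs{T_{\mathrm{hi}}}-\abs{T_{\mathrm{lo}}}=1\) finishes the argument.

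The only real content is the monotonicity step, which is supplied by \cref{lem:quota-monotone}. The remaining care is in the edge cases: the forced-reject and forced-accept regimes, and checking that the gap never exceeds \(1\) or becomes negative. The case analysis above shows the gap can only stay at \(1\) or drop to \(0\), after which it stays at \(0\).
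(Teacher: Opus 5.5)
Your proof is correct and follows the same route as the paper. Quota monotonicity (\cref{lem:quota-monotone}) together with the common uniforms makes the smaller-quota continuation a subset of the larger-quota one, and the size difference of one then gives exactly two differing coordinates once \(i\) is counted. Your explicit gap invariant \(r_{\mathrm{hi}}-r_{\mathrm{lo}}\in\{0,1\}\) and the endpoint conventions \(p_j(R'',0)=0\), \(p_j(R'',\abs{R''})=1\) just spell out details that the paper's short proof leaves implicit.
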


\begin{proof}
After the disagreement at \(i\), the remaining coordinate set is the same in
both runs, while the residual quotas differ by one.  By
\cref{lem:quota-monotone}, at every future state with the same remaining
coordinates, every conditional marginal for the larger quota is at least the
corresponding conditional marginal for the smaller quota.  Therefore the common
uniforms couple the larger-quota continuation so that it contains the
smaller-quota continuation.  The two future continuations have sizes that
differ by exactly one, so their future symmetric difference is exactly one.
Together with the coordinate \(i\), the total symmetric difference is exactly
two.
\end{proof}

\section{Infinitesimal Disagreement}\label{sec:infinitesimal}

Let \(x(t)\in\relint(\Delta_{n,k}^{=})\) be a smooth path, and choose a smooth
external-field parametrization
\[
  \theta_i(t)=\log\lambda_i(t).
\]
Since adding a common constant to all \(\theta_i(t)\) does not change the
measure, \(h=\dot\theta(0)\) is defined modulo multiples of \(\1\); all
expressions below are invariant under this ambiguity.  Let \(C=C_{[n],k}\) be
the covariance matrix under \(\mu_{x(0)}\).  If
\[
  H(S)=\sum_{i\in S}h_i,
\]
then
\begin{equation}\label{eq:marginal-velocity}
  \dot x_i(0)
  =
  \Cov_{\mu_{x(0)}}(\1_i,H)
  =
  (Ch)_i.
\end{equation}
Indeed, differentiating an expectation in the exponential family gives
\(\frac{d}{dt}\E_{\mu_t}{f}=\Cov_{\mu_t}(f,H)\).  Applying this with
\(f=\1_i\) gives \cref{eq:marginal-velocity}.  The same calculation applies
after any conditioning because conditional external-field measures have the
same form.  Thus, at state \((R,r)\), the derivative of the conditional
marginal vector is \(C_{R,r}h_R\).

For a fixed run of the base sampler at time \(0\), let \((R_\ell,r_\ell)\) be
the unrevealed set and residual quota just before the \(\ell\)th reveal.
Since the remaining order is random, the next revealed coordinate is uniform in
\(R_\ell\).
Define the infinitesimal first-disagreement functional
\begin{equation}\label{eq:V-def}
  \mathsf V(h)
  =
  \E*{
    \sum_\ell
    \frac{1}{\abs{R_\ell}}
    \norm{C_{R_\ell,r_\ell}h_{R_\ell}}_1
  },
\end{equation}
where the expectation is over the random order and the set sampled from the
base measure.  The factor \(1/\abs{R_\ell}\) appears because the next
coordinate is uniform in \(R_\ell\), and the \(\ell_1\) norm sums the threshold
changes over the possible next coordinates.

\begin{lemma}[Infinitesimal reduction]\label{lem:infinitesimal-reduction}
If, for every relative-interior point and every tangent direction,
\[
  \mathsf V(h)\le 3\norm{Ch}_1,
\]
then the sequential sampler has stretch at most \(6\) on
\(\Delta_{n,k}^{=}\).
\end{lemma}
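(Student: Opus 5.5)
The plan is to integrate an infinitesimal stretch bound along the straight segment from \(x\) to \(y\), using the triangle inequality for expected symmetric difference.

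\textbf{Step 1 (reduction to the relative interior).} Let \(x,y\in\relint(\Delta_{n,k}^{=})\) and set \(x(t)=(1-t)x+ty\). This segment stays in the relative interior. The map \(x\mapsto\theta\), normalized say by \(\sum_i\theta_i=0\), is smooth there. It is the inverse of the gradient map of the strictly convex log-partition function restricted to \(\1^\perp\), and \(C\) is nonsingular on that subspace. So \(\theta(t)\) is a smooth path, and \cref{eq:marginal-velocity} gives \(Ch(t)=\dot x(t)=y-x\).

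Write \(F(s,t)=\E\abs{A(x(s))\triangle A(x(t))}\). Because symmetric difference is a metric, \(F(0,1)\le\sum_j F(t_j,t_{j+1})\) for any partition of \([0,1]\). It therefore suffices to prove
\[
  \limsup_{\delta\to0}\frac{F(t,t+\delta)}{\delta}\le 6\norm{\dot x(t)}_1
\]
uniformly in \(t\), and then sum over a fine partition. Boundary points \(x,y\) are handled afterwards by a limiting argument. The cleanest version is to note that the conditional thresholds extend continuously to the boundary with the forced-coordinate convention, so \(F\) is continuous in its arguments.

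\textbf{Step 2 (couple the two runs and split at the first disagreement).} Fix \(t\) and compare the runs at parameters \(\theta=\theta(t)\) and \(\theta'=\theta(t+\delta)\), using the same \(\pi\) and the same uniforms. While the two states agree, they disagree at the next coordinate \(i\) only if \(u_i\) falls between \(p_i(R,r)\) and \(p_i'(R,r)\). Given the state, this has probability
\[
  \frac{1}{\abs R}\sum_{i\in R}\abs{p_i'(R,r)-p_i(R,r)}
  =\frac{\delta}{\abs R}\norm{C_{R,r}h_R}_1+O(\delta^2).
\]
The \(O(\delta^2)\) term is uniform over the finitely many states and over compact \(t\)-ranges. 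Summing over steps shows that the probability of any disagreement is \(\delta\,\mathsf V(h)+O(\delta^2)\).

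\textbf{Step 3 (cost after the first disagreement).} After the first disagreement, the two runs use slightly different fields \(\theta\) and \(\theta'\), so \cref{lem:first-disagreement} does not apply directly. I would handle this by a three-way comparison. Let \(A'\) be the \(\theta'\)-run continued after the split, and let \(\tilde A\) be the hypothetical run that follows the \(\theta'\) choice at the split coordinate but the \(\theta\) field afterwards. By \cref{lem:first-disagreement}, \(\abs{A\triangle\tilde A}=2\). Moreover \(\tilde A\ne A'\) requires a second threshold gap, of size \(O(\delta)\), to catch a uniform. That event has conditional probability \(O(\delta)\) and costs at most \(2k\). Hence the expected cost is \(2\Pr[\text{disagree}]+O(\delta^2)\), which gives
\[
  F(t,t+\delta)\le 2\delta\,\mathsf V(h)+O(\delta^2)\le 6\delta\norm{\dot x}_1+O(\delta^2)
\]
under the hypothesis. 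Dividing by \(\delta\) and letting \(\delta\to0\) gives the required bound.

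\textbf{Main obstacle.} I expect the obstacle to be uniformity of the error terms near degenerate states. Near the end of the scan, where \(r=0\) or \(r=\abs R\), thresholds are identically \(0\) or \(1\) and their derivatives vanish, so there is no trouble there. The real concern is that the second-order remainders are uniform only on compact subsets of the relative interior. This is exactly why I would restrict to a compact segment inside \(\relint(\Delta_{n,k}^{=})\) first and extend to the boundary only afterwards by continuity.
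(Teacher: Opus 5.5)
Your proposal is correct and follows essentially the same route as the paper. Couple the runs at \(x(t)\) and \(x(t+\delta)\), bound the probability of a first disagreement by \(\delta\mathsf V(h)+O(\delta^2)\), charge exactly \(2\) per first disagreement via \cref{lem:first-disagreement} with the field frozen afterwards (your auxiliary run \(\tilde A\) is the paper's frozen-field run), absorb the second-crossing correction into \(O(\delta^2)\), and integrate along the segment; your partition and triangle-inequality argument, together with the smoothness and uniformity remarks, make rigorous the integration and boundary-approximation steps that the paper states only briefly.
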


\begin{proof}
Couple \(x(0)\) and \(x(\delta)\) using the same order and uniforms.  Until the
first disagreement, the two runs have the same revealed decisions, and the
threshold at state \((R,r)\) changes by
\[
  \delta (C_{R,r}h_R)_i+o(\delta)
\]
at coordinate \(i\).  Conditional on the common history, the probability that
the uniform \(u_i\) falls between the two thresholds is therefore
\[
  \delta\abs{(C_{R,r}h_R)_i}+o(\delta).
\]
Summing over the random next coordinate and then over reveal times gives
\[
  \P{\text{the two runs ever disagree}}
  \le
  \delta \mathsf V(h)+o(\delta).
\]
It remains to translate a first disagreement into final symmetric difference.
If, after the first threshold crossing, we freeze the external field at its
time-\(0\) value, then \cref{lem:first-disagreement} says that the final damage
is exactly \(2\).  In the actual \(x(\delta)\)-run, all later thresholds differ
from their time-\(0\) values by \(O(\delta)\), uniformly over the finitely many
states reached by a fixed relative-interior path.  Hence the conditional
expected extra future damage after the first crossing is \(O(\delta)\).  Since
the first crossing itself has probability \(O(\delta)\), this extra
contribution is \(O(\delta^2)=o(\delta)\).  Hence
\[
  \frac{d}{dt}
  \E*{\abs{A(x(t))\triangle A(x(0))}}\bigg|_{t=0}
  \le
  2\mathsf V(h)
  \le
  6\norm{\dot x(0)}_1,
\]
where \(\dot x(0)=Ch\) by \cref{eq:marginal-velocity}.  Integrating this
bound along the line segment from \(x\) to \(y\) gives the claimed stretch for
relative-interior points.  Boundary points follow by approximation and the
forced-coordinate reduction.
\end{proof}

\section{The Variance-Drop Bound}\label{sec:variance-drop}

This section proves the dynamic part of the argument:
\[
  \mathsf V(h)
  \le
  2\sum_{i<j}c_{ij}\abs{h_i-h_j},
\]
where \(c_{ij}=-C_{ij}\) for the initial covariance matrix \(C=C_{[n],k}\).
This reduces an adaptive quantity, accumulated over the whole sequential
exposure, to one static energy at the initial measure.

\begin{lemma}[Coarea for covariance Laplacians]\label{lem:coarea}
Let \(C=C_{R,r}\), let \(c_{ij}=-C_{ij}\), and let \(h\in\R^R\).  For
\(s\in\R\), put \(A_s=\set{i\in R\given h_i\ge s}\).  If
\(S\sim\mu_{R,r}\), then
\[
  \norm{Ch}_1
  \le
  2\int_{-\infty}^{\infty}
  \Var_{\mu_{R,r}}\!\parens*{\abs{S\cap A_s}}\,ds.
\]
\end{lemma}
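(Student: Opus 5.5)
The plan is to prove the inequality by a layer-cake decomposition of \(h\), followed by an exact computation of \(\norm{C\1_A}_1\) for a single indicator vector \(\1_A\). The only inputs are the two standard properties of \(C=C_{R,r}\) recorded after \cref{conj:covariance-spectral}: \(C\1=0\), and \(c_{ij}=-C_{ij}\ge0\) for \(i\ne j\). The spectral item 3 is not used. The degenerate cases \(r=0\) or \(r=\abs{R}\) give \(C=0\), so both sides vanish and there is nothing to prove.

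\textbf{Step 1 (layer cake).} Let \(m=\min_i h_i\) and \(M=\max_i h_i\). For every \(i\in R\),
\[
  h_i = m + \int_{m}^{M} \1\{h_i\ge s\}\,ds ,
\]
which in vector form reads \(h=m\1+\int_m^M \1_{A_s}\,ds\). Since \(C\1=0\), applying \(C\) gives
\[
  Ch=\int_m^M C\1_{A_s}\,ds .
\]
The triangle inequality for the Bochner integral then gives
\[
  \norm{Ch}_1\le\int_m^M\norm{C\1_{A_s}}_1\,ds .
\]
For \(s\le m\) we have \(A_s=R\), and for \(s>M\) we have \(A_s=\emptyset\). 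In both cases \(\abs{S\cap A_s}\) is deterministic (equal to \(r\) or \(0\)), so its variance is zero. The integral may therefore be extended to all of \(\R\) without changing its value.

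\textbf{Step 2 (one cut).} Fix \(A\subseteq R\) and write \(\bar A=R\setminus A\). Using \(d_i=\sum_{j\ne i}c_{ij}\), which is the row-sum form of \(C\1=0\), the coordinates of \(C\1_A\) are as follows.
\begin{itemize}
  \item For \(i\in A\): \((C\1_A)_i=d_i-\sum_{j\in A,\,j\ne i}c_{ij}=\sum_{j\in\bar A}c_{ij}\ge0\).
  \item For \(i\in\bar A\): \((C\1_A)_i=-\sum_{j\in A}c_{ij}\le0\).
\end{itemize}
Because the sign of each coordinate is known, the \(\ell_1\) norm can be evaluated exactly:
\[
  \norm{C\1_A}_1=\sum_{i\in A}\sum_{j\in\bar A}c_{ij}+\sum_{i\in\bar A}\sum_{j\in A}c_{ij}
  =2\,\mathrm{cut}_c(A,\bar A).
\]
The quadratic form is computed by summing the coordinates over \(A\):
\[
  \1_A^{\top}C\1_A=\sum_{i\in A}(C\1_A)_i=\mathrm{cut}_c(A,\bar A).
\]
On the other hand, \(C\) is the covariance matrix of the indicators under \(\mu_{R,r}\), so \(\1_A^{\top}C\1_A=\Var_{\mu_{R,r}}(\abs{S\cap A})\). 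Combining these identities yields
\[
  \norm{C\1_A}_1=2\Var_{\mu_{R,r}}(\abs{S\cap A}).
\]

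\textbf{Step 3 (conclusion).} Substituting the identity from Step 2 with \(A=A_s\) into the bound from Step 1 gives the claimed inequality.

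There is no real obstacle in this argument. The one point needing care is the sign bookkeeping in Step 2. It is exactly the nonpositivity of the off-diagonal entries of \(C\) (negative correlation) that makes \(\norm{C\1_A}_1\) equal to twice the cut weight rather than merely bounded by it. This is also where the factor \(2\) in the statement comes from.
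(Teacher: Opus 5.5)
Your proof is correct and uses the same ingredients as the paper: the Laplacian structure of \(C\) (zero row sums, \(c_{ij}\ge0\)), the layer-cake/coarea decomposition over the level sets \(A_s\), and the identity \(\Var_{\mu_{R,r}}(\abs{S\cap A})=\sum_{i\in A,\,j\notin A}c_{ij}\). The only difference is the order of steps: the paper first bounds \(\norm{Ch}_1\le 2\sum_{i<j}c_{ij}\abs{h_i-h_j}\) edgewise and then applies coarea to that edge sum, whereas you layer-cake \(h\) itself and compute \(\norm{C\1_A}_1=2\sum_{i\in A,\,j\notin A}c_{ij}\) exactly, which gives the same bound.
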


\begin{proof}
Since \(C\) is a Laplacian,
\[
  (Ch)_i=\sum_{j\ne i}c_{ij}(h_i-h_j).
\]
Therefore
\[
  \norm{Ch}_1
  \le
  2\sum_{\substack{i<j\\ i,j\in R}}c_{ij}\abs{h_i-h_j}.
\]
The usual coarea identity for functions on a weighted graph gives
\[
  \sum_{\substack{i<j\\ i,j\in R}}c_{ij}\abs{h_i-h_j}
  =
  \int_{-\infty}^{\infty}
  \sum_{\substack{i\in A_s\\ j\notin A_s}}c_{ij}\,ds.
\]
The identity follows from the layer-cake formula
\(\abs{h_i-h_j}=\int \abs{\1\{h_i\ge s\}-\1\{h_j\ge s\}}\,ds\), summed with
weights \(c_{ij}\).
It remains to identify the cut with a variance.  Since \(C\1=0\),
\[
  \Var_{\mu_{R,r}}\!\parens*{\abs{S\cap A}}
  =
  \sum_{i,j\in A}C_{ij}
  =
  -\sum_{\substack{i\in A\\ j\notin A}}C_{ij}
  =
  \sum_{\substack{i\in A\\ j\notin A}}c_{ij}.
\]
Applying this with \(A=A_s\) proves the lemma.
\end{proof}

\begin{lemma}[One-step variance drop]\label{lem:variance-drop}
Fix an external-field measure \(\mu_{R,r}\) for which
\cref{conj:covariance-spectral} holds, let \(m=\abs{R}\), and fix
\(A\subseteq R\).  Let
\[
  Y=\abs{S\cap A},
  \qquad
  S\sim\mu_{R,r}.
\]
Reveal a uniformly random coordinate \(I\in R\) together with the value
\(\1_I(S)\), and let \(\cG\) be the revealed information.  Then
\[
  \E*{\Var(Y\mid\cG)}
  \le
  \parens*{1-\frac1m}\Var(Y).
\]
\end{lemma}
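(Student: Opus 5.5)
The plan is to use the law of total variance: the claim becomes a lower bound on the variance of the conditional expectation, which can be written as a quadratic form in the normalized covariance matrix \(B\) of \cref{conj:covariance-spectral}. Write \(C=C_{R,r}\), \(d_i=C_{ii}\), \(D=\diag(d_i)\), \(p_i=p_i(R,r)\), and \(a=\1_A\in\R^R\), so that \(Y=\sum_{i}a_i\1_i(S)\). By the law of total variance,
\[
  \Var(Y)=\E*{\Var(Y\mid\cG)}+\Var\parens*{\E{Y\mid\cG}},
\]
so the lemma is equivalent to \(\Var(\E{Y\mid\cG})\ge \frac1m\Var(Y)\).

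First I will compute \(\E{Y\mid\cG}\) explicitly. Conditional on \(I=i\), the revealed variable \(\1_i(S)\) is binary, so the conditional expectation of \(Y\) given it is an affine function of it:
\[
  \E{Y\mid I=i,\1_i(S)}=\E{Y}+\frac{\Cov(Y,\1_i)}{d_i}\parens*{\1_i(S)-p_i}.
\]
This uses \(d_i=p_i(1-p_i)>0\), which holds because the hypothesis excludes deterministic coordinates. In particular \(\E{Y\mid I=i}=\E{Y}\) for every \(i\), so the conditional means do not vary with \(I\). Since \(I\) is uniform on \(R\) and independent of \(S\),
\[
  \Var\parens*{\E{Y\mid\cG}}
  =\frac1m\sum_{i\in R}\frac{\Cov(Y,\1_i)^2}{d_i}
  =\frac1m\sum_{i\in R}\frac{(Ca)_i^2}{d_i}
  =\frac1m\,a^\top C D^{-1} C a .
\]

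Next I pass to the normalized matrix. Set \(w=D^{1/2}a\) and \(B=D^{-1/2}CD^{-1/2}\). Then
\[
  \Var(Y)=a^\top Ca=w^\top Bw,
  \qquad
  a^\top CD^{-1}Ca=w^\top B^2w .
\]
So the required inequality is \(w^\top B^2 w\ge w^\top B w\). Expand \(w\) in an orthonormal eigenbasis of the symmetric matrix \(B\). By item 3 of \cref{conj:covariance-spectral}, every eigenvalue \(\beta\) of \(B\) is either \(0\) or lies in \([1,2]\). In both cases \(\beta^2\ge\beta\), and the inequality follows coordinatewise in the eigenbasis.

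The substantive input is exactly the lower spectral bound \(\beta\ge1\) on the nonzero eigenvalues; the upper bound \(2\) is not used in this step. Everything else is routine bookkeeping. The one point I would double-check is the nondegeneracy \(d_i>0\) for all \(i\in R\). The conjecture assumes this, and the sampler only invokes the lemma at states with \(0<r<\abs{R}\) and positive weights, so it holds there.
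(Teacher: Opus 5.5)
Your proposal is correct and follows the paper's proof essentially step for step: the law of total variance gives the explained variance \(\frac1m a^\top CD^{-1}Ca\), and then \(CD^{-1}C\succeq C\) follows from \(B^2\succeq B\), using only the lower spectral bound. The one thing the paper adds is the remark that for \(r\in\{0,m\}\) the variable \(Y\) is deterministic, so the bound holds trivially; this covers the degenerate states where the lemma is still invoked in the telescoping argument of the dynamic-to-static bound.
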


\begin{proof}
If \(r=0\) or \(r=m\), then \(Y\) is deterministic and the claim is trivial.
Assume \(0<r<m\).  Since the external field is positive, no coordinate is
deterministic.
Let \(C=C_{R,r}\), \(D=\diag(C_{ii})\), and \(a=\1_A\).  The law of total
variance gives
\[
  \Var(Y)-\E*{\Var(Y\mid\cG)}
  =
  \frac1m
  \sum_{i\in R}
  \frac{\Cov(\1_i,Y)^2}{\Var(\1_i)}.
\]
This is the expected amount of variance explained by observing one uniformly
random coordinate.  Since \(\Cov(\1_i,Y)=(Ca)_i\) and
\(\Var(\1_i)=D_{ii}\), the right-hand side is
\[
  \frac1m\, a^\top C D^{-1} C a,
\]
and \(D^{-1}\) is well-defined.  By \cref{conj:covariance-spectral}, the
nonzero eigenvalues of
\(D^{-1/2}CD^{-1/2}\) are at least \(1\).  Equivalently,
\[
  C D^{-1}C\succeq C.
\]
Indeed, after conjugating by \(D^{-1/2}\), this is the inequality
\(B^2\succeq B\), where the eigenvalues of \(B\) are \(0\) and numbers at
least \(1\).
Thus
\[
  \Var(Y)-\E*{\Var(Y\mid\cG)}
  \ge
  \frac1m\,a^\top C a
  =
  \frac1m\Var(Y),
\]
as claimed.
\end{proof}

\begin{lemma}[Dynamic-to-static bound]\label{lem:dynamic-static}
Assume \cref{conj:covariance-spectral} holds for every conditional
external-field measure reached by the sampler.  Then, for every \(h\),
\[
  \mathsf V(h)
  \le
  2\sum_{i<j}c_{ij}\abs{h_i-h_j},
\]
where \(c_{ij}=-C_{ij}\) for the initial covariance matrix \(C=C_{[n],k}\).
\end{lemma}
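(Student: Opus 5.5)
The plan is to combine \cref{lem:coarea} and \cref{lem:variance-drop} through a telescoping/supermartingale argument, one threshold level \(s\) at a time. Fix \(h\). For each \(s\in\R\) set \(A_s=\set{i\in[n]\given h_i\ge s}\). At reveal step \(\ell\), with state \((R_\ell,r_\ell)\), apply \cref{lem:coarea} to the conditional measure \(\mu_{R_\ell,r_\ell}\) with the vector \(h_{R_\ell}\). The relevant level sets are \(A_s\cap R_\ell\). Writing \(S_\ell\sim\mu_{R_\ell,r_\ell}\) for the unrevealed part, this gives
\[
  \frac{1}{\abs{R_\ell}}\norm{C_{R_\ell,r_\ell}h_{R_\ell}}_1
  \le
  \frac{2}{\abs{R_\ell}}\int_{-\infty}^{\infty}
  \Var_{\mu_{R_\ell,r_\ell}}\!\parens*{\abs{S_\ell\cap A_s}}\,ds .
\]
So it suffices to show, for each fixed \(s\),
\[
  \E*{\sum_\ell \frac{1}{\abs{R_\ell}} W_\ell(s)} \le \Var_{\mu_{[n],k}}\parens*{\abs{S\cap A_s}},
\]
where \(W_\ell(s)\) is the conditional variance at step \(\ell\). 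One then integrates over \(s\) using Tonelli, since everything is nonnegative. The final step uses the variance–cut identity from the proof of \cref{lem:coarea} at the initial measure, which turns \(\int \Var(\abs{S\cap A_s})\,ds\) into \(\sum_{i<j}c_{ij}\abs{h_i-h_j}\).

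Now fix \(s\) and let \(\mathcal F_\ell\) be the information revealed before step \(\ell\), namely the order so far and the decisions so far. Let \(Y=\abs{S\cap A_s}\), where \(S\) is the final sample. Conditioned on \(\mathcal F_\ell\), \(S\) restricted to \(R_\ell\) has law \(\mu_{R_\ell,r_\ell}\) by \cref{lem:exactness}. Also, \(Y\) minus the revealed part is \(\abs{S_\ell\cap A_s}\), so \(W_\ell=\Var(Y\mid\mathcal F_\ell)\). The next coordinate is uniform in \(R_\ell\), independently of the sample. Hence \cref{lem:variance-drop}, applied to the measure \(\mu_{R_\ell,r_\ell}\) with \(A=A_s\cap R_\ell\), gives
\[
  \E*{W_{\ell+1}\mid\mathcal F_\ell}\le \parens*{1-\tfrac{1}{\abs{R_\ell}}}W_\ell .
\]
When \(r_\ell\in\{0,\abs{R_\ell}\}\) this is trivial. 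Otherwise \cref{conj:covariance-spectral} is assumed to hold for the reached measure. Rearranging, \(\frac{1}{\abs{R_\ell}}W_\ell\le W_\ell-\E{W_{\ell+1}\mid\mathcal F_\ell}\). Taking expectations and summing over \(\ell=1,\dots,n\) telescopes to
\[
  \E{W_1}-\E{W_{n+1}}\le W_1=\Var_{\mu_{[n],k}}(Y),
\]
because \(W_{n+1}=0\) once everything is revealed and \(W_\ell\ge0\).

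Combining the pieces,
\[
  \mathsf V(h)\le 2\int \Var_{\mu_{[n],k}}(\abs{S\cap A_s})\,ds = 2\sum_{i<j}c_{ij}\abs{h_i-h_j}.
\]

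The main point needing care is the identification of the conditional variance of the remaining cut count with the variance under the conditional external-field measure. Two things must be checked there. First, that the randomness of the order (uniform next coordinate, independent of \(S\)) matches the revealed \(\sigma\)-algebra \(\cG\) of \cref{lem:variance-drop}. Second, that the level sets of \(h_{R_\ell}\) are exactly the sets \(A_s\cap R_\ell\), so the same fixed \(s\)-family can be used across all steps. Shifting \(h\) by a constant only translates \(s\), so the invariance modulo \(\1\) causes no issue.
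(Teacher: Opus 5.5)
Your proposal is correct and follows the paper's route: apply the coarea lemma at each reached state with level sets \(A_s\cap R_\ell\), use the one-step variance drop for each fixed \(s\), telescope, and convert back with the cut--variance and coarea identities at the initial measure. Your filtration \(\mathcal F_\ell\), the identification \(W_\ell=\Var(Y\mid\mathcal F_\ell)\), and the Tonelli interchange spell out details that the paper's telescoping step leaves implicit.
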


\begin{proof}
Apply \cref{lem:coarea} to each random state \((R_\ell,r_\ell)\):
\[
  \norm{C_{R_\ell,r_\ell}h_{R_\ell}}_1
  \le
  2\int
  \Var_{\mu_{R_\ell,r_\ell}}
  \!\parens*{\abs{S\cap A_s\cap R_\ell}}\,ds,
\]
where \(A_s=\set{i\in[n]\given h_i\ge s}\).  Therefore
\[
  \mathsf V(h)
  \le
  2\int
  \E*{
    \sum_\ell
    \frac1{\abs{R_\ell}}
    \Var_{\mu_{R_\ell,r_\ell}}
    \!\parens*{\abs{S\cap A_s\cap R_\ell}}
  }\,ds.
\]
In each variance above, \(S\) denotes a sample from the conditional
external-field measure appearing in the subscript.
For each fixed \(s\), \cref{lem:variance-drop} says that the variance of
\(\abs{S\cap A_s\cap R_\ell}\) drops, in expectation, by at least
\[
  \frac1{\abs{R_\ell}}
  \Var_{\mu_{R_\ell,r_\ell}}
  \!\parens*{\abs{S\cap A_s\cap R_\ell}}
\]
when the next uniformly random coordinate and its value are revealed.  Summing
over \(\ell\) telescopes:
\[
  \E*{
    \sum_\ell
    \frac1{\abs{R_\ell}}
    \Var_{\mu_{R_\ell,r_\ell}}
    \!\parens*{\abs{S\cap A_s\cap R_\ell}}
  }
  \le
  \Var_{\mu_{[n],k}}\!\parens*{\abs{S\cap A_s}}.
\]
Hence
\[
  \mathsf V(h)
  \le
  2\int
  \Var_{\mu_{[n],k}}\!\parens*{\abs{S\cap A_s}}\,ds.
\]
Using the coarea identity in the proof of \cref{lem:coarea} at the initial
state gives
\[
  \int
  \Var_{\mu_{[n],k}}\!\parens*{\abs{S\cap A_s}}\,ds
  =
  \sum_{i<j}c_{ij}\abs{h_i-h_j}.
\]
\end{proof}

\section{The Static Electrical Bound}\label{sec:static}

The covariance matrix \(C=C_{[n],k}\) is a Laplacian.  We keep the notation
\[
  c_{ij}=-C_{ij}\quad(i\ne j),
  \qquad
  d_i=C_{ii}=\sum_{j\ne i}c_{ij}.
\]
The following theorem is the static heart of the proof.  It is purely a
statement about weighted graphs.  The spectral assumption says that the
normalized adjacency matrix has no positive nontrivial eigenvalues.  For the
covariance Laplacians arising from external-field \(k\)-subset measures,
writing \(D=\diag(d_i)\) and \(W\) for the weighted adjacency matrix, this is
exactly the normalized covariance bound from \cref{conj:covariance-spectral},
since
\[
  D^{-1/2}WD^{-1/2}=I-D^{-1/2}CD^{-1/2}.
\]
\begin{theorem}[Unit-current bound]\label{thm:unit-current}
Assume \(C\) is any connected graph Laplacian with edge weights \(c_{ij}\),
degrees \(d_i\).  Let \(D=\diag(d_i)\), and let \(W\) be the weighted adjacency
matrix, with
\(W_{ij}=c_{ij}\) for \(i\ne j\) and \(W_{ii}=0\).  Assume the normalized
adjacency
\[
  D^{-1/2}WD^{-1/2}
\]
has Perron eigenvalue \(1\) and all other eigenvalues nonpositive.  Fix
distinct vertices \(a,b\), and let \(v\) be any solution of
\[
  Cv=\1_a-\1_b.
\]
Then
\[
  \sum_{i<j}c_{ij}\abs{v_i-v_j}\le 3.
\]
\end{theorem}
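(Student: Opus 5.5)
The plan is to pass from potentials to flows and then to use the spectral hypothesis through the random walk \(P=D^{-1}W\).

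\emph{Flow reformulation.} Since \(C\) is connected, \(v\) is unique up to adding a constant, and the left-hand side does not change under that ambiguity. Set \(f_{ij}=c_{ij}(v_i-v_j)\). This is the unit electrical \(a\)–\(b\) flow, and the quantity to bound is \(\norm{f}_1\). The electrical flow is acyclic: it runs downhill in \(v\). Hence every vertex \(i\notin\set{a,b}\) is harmonic, \(v_i=(Pv)_i\). Counting each edge at its upstream endpoint then gives
\[
  \sum_{i<j}c_{ij}\abs{v_i-v_j}=1+\sum_{i\ne a,b}\tau_i,
  \qquad
  \tau_i=\sum_{j:\,v_j<v_i}c_{ij}(v_i-v_j)=\tfrac12\sum_j c_{ij}\abs{v_i-v_j}.
\]
So the goal is the throughput bound \(\sum_{i\ne a,b}\tau_i\le 2\).

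\emph{Using the spectrum.} Put \(g=D^{-1/2}(\1_a-\1_b)\), which is orthogonal to \(D^{1/2}\1\), and \(u=D^{1/2}v\). With \(B=D^{-1/2}CD^{-1/2}=I-D^{-1/2}WD^{-1/2}\), we have \(u=B^{+}g\). By hypothesis, the nontrivial spectrum of \(B\) lies in \([1,2]\), so on \(g^{\perp}\)-free directions \(B^{+}\) has spectrum in \([1/2,1]\).

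I would expand \(v\) as
\[
  v=D^{-1}(\1_a-\1_b)+PD^{-1}(\1_a-\1_b)+P^2v .
\]
Here \(P^2\) is the two-step walk. Since \(D^{-1/2}WD^{-1/2}\) has nontrivial spectrum in \([-1,0]\), the two-step operator is positive semidefinite. Moreover, \(D^{1/2}P^2D^{-1/2}\) is a contraction that is at most the identity on the relevant subspace.

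The first two terms contribute explicitly:
\begin{itemize}
  \item the term \(D^{-1}(\1_a-\1_b)\) contributes at most \(2\) to \(\sum c_{ij}\abs{\cdot}\), because its gradients are carried only by edges at \(a\) and at \(b\);
  \item the term \(PD^{-1}(\1_a-\1_b)\) contributes at most \(2\) as well, since the walk preserves total mass;
  \item these contributions overlap, and the overlap yields the \(+1\) that should reduce the combined total to \(3\).
\end{itemize}
An alternative route is to take \(u=\tfrac34 g+z\) with \(\norm{z}\le\tfrac14\norm{g}\). The main term then contributes \(\tfrac34\cdot 2\), and the remainder must be controlled.

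\emph{Main obstacle.} The hard step is converting the \(\ell_2\) spectral control into an \(\ell_1\) bound on gradients without losing a factor like \(\sqrt{\sum_i d_i}\). A naive Cauchy–Schwarz on the remainder \(D^{-1/2}z\) or \(P^2v\) introduces exactly this dimension dependence.

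I would first try to exploit monotonicity and acyclicity. Order the vertices by \(v\). Each level-set cut carries exactly one unit of flow, which is the coarea picture of \cref{lem:coarea} with the cut weights \(\sum_{i\in A_s,\,j\notin A_s}c_{ij}\). The \(\ell_1\) mass then becomes \(\int \mathrm{cut}(A_s)\,ds\), while \(\int \mathrm{cut}(A_s)(\text{gap})\,ds\) equals the effective resistance. I would then bound \(\mathrm{cut}(A_s)\) against \(\mathrm{vol}(A_s)\) using the normalized spectral gap \(\ge 1\), in the form \(\mathrm{cut}(A)\ge \mathrm{vol}(A)\mathrm{vol}(A^c)/\mathrm{vol}(R)\) given by item 3 of \cref{conj:covariance-spectral} via \(B\succeq\) projection. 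I would combine this with the upper bound \(B\preceq 2\) to show that potentials decay geometrically away from \(a\) and \(b\). Integrating over the level sets should produce the constant \(3\), namely \(1\) for the source, \(1\) for the sink, and \(1\) for the bulk.
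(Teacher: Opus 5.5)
\paragraph{Verdict: there is a genuine gap.}
Your preliminary reductions are correct. These are the flow identity \(\sum_{i<j}c_{ij}\abs{v_i-v_j}=1+\sum_{i\ne a,b}\tau_i\) and the expansion \(v=\eta+P\eta+P^2v\) with \(\eta=D^{-1}(\1_a-\1_b)\). Writing \(\Phi(w)=\sum_{i<j}c_{ij}\abs{w_i-w_j}\), the bounds \(\Phi(\eta)=2\) and \(\Phi(P\eta)\le 2\) also hold. But none of your three routes closes, and the step you call the ``main obstacle'' is exactly the content of the theorem.

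\begin{itemize}
\item In the expansion, the triangle inequality already gives \(2+2>3\) from the first two terms. The claimed ``overlap yields the \(+1\)'' is never argued, and \(\Phi(P^2v)\) is never bounded at all. Knowing that \(D^{1/2}P^2D^{-1/2}\) is an \(\ell_2\) contraction gives no dimension-free control of \(\ell_1\) gradients.
\item The \(\frac34 g+z\) route stalls on the dimension-dependent Cauchy--Schwarz loss that you yourself identify.
\item In the level-set route, a \emph{lower} bound \(\mathrm{cut}(A)\ge\mathrm{vol}(A)\mathrm{vol}(A^c)/\mathrm{vol}\) cannot upper-bound \(\int\mathrm{cut}(A_s)\,ds\). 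The inequality \(B\preceq 2I\) holds for every weighted graph, so it cannot drive any decay. Testing the hypothesis only on indicator vectors also discards most of its strength.
\end{itemize}

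\paragraph{The missing idea.}
You need to use the hypothesis as a full matrix inequality. Scale so that \(\sum_i d_i=1\), and put \(\pi=d\) and \(P=D^{-1}W\). The hypothesis is then equivalent to \(Q=\pi\pi^\top-\diag(\pi)P\succeq0\), i.e.\ your \(C\succeq D-dd^\top\) tested on all vectors rather than on indicators.

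Factoring \(Q\) as a Gram matrix gives unit vectors \(r_i\) with \(\sum_i\pi_ir_i=0\) and \(c_{ij}=\pi_i\pi_j(1-\langle r_i,r_j\rangle)\). For \(\pi\)-mean-zero \(v\), the function \(u=Pv\) is then linear in these vectors: \(u_i=-\langle r_i,\ell\rangle\) with \(\ell=\sum_j\pi_jv_jr_j\). Substituting \(v=\eta+u\) gives \(\ell=(I+\Sigma)^{-1}(r_a-r_b)\), where \(\Sigma=\sum_i\pi_ir_ir_i^\top\). Since \(t/(1+t)^2\le\frac14\) for \(t\ge0\), this yields \(\sum_i\pi_iu_i^2=\ell^\top\Sigma\ell\le\frac14\norm{r_a-r_b}^2\le1\).

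This is the dimension-free step. The spike \(\eta\), whose \(L^2(\pi)\) norm is \((\pi_a^{-1}+\pi_b^{-1})^{1/2}\), is replaced by \(r_a-r_b\), of norm at most \(2\). The resolvent \((I+\Sigma)^{-1}\) captures exactly the cancellation in the alternating Neumann series that you hoped to extract from ``overlap''.

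To finish, apply Cauchy--Schwarz against the probability measure \(\pi_iP_{ij}\), using \(1-\langle r_i,r_j\rangle\le2\); this gives \(\Phi(u)\le1\). The one-step split \(v=\eta+Pv\), together with your \(\Phi(\eta)=2\), then gives \(3\). No second-order expansion is needed.

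Your coarea picture would also close at this point, but with the trivial \emph{upper} bound \(\mathrm{cut}(A)\le\min(\mathrm{vol}(A),\mathrm{vol}(A^c))\), which gives \(\Phi(v)\le\sum_i\pi_i\abs{v_i}\le2+1\).
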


\begin{proof}
The solution \(v\) exists because \(C\) is the Laplacian of a connected graph:
its image is \(\set{z\given \sum_i z_i=0}\).  It is unique modulo multiples of
\(\1\), and all quantities below depend only on potential differences.

The statement is invariant under scaling \(C\) by a positive constant.  Indeed,
the normalized adjacency matrix does not change, the solution \(v\) scales by
the inverse constant, and \(\sum_{i<j}c_{ij}\abs{v_i-v_j}\) is unchanged.  We
may therefore assume
\[
  \sum_i d_i=1.
\]
Normalize the conductances into a reversible Markov chain.  Let
\[
  \pi_i=d_i,
  \qquad
  P_{ij}=\frac{c_{ij}}{d_i}\quad(i\ne j),
  \qquad
  P_{ii}=0.
\]
Then \(\pi\) is stationary for \(P\), and \(P\) is reversible because
\[
  \pi_iP_{ij}=c_{ij}=\pi_jP_{ji}.
\]
Thus the original edge weights are exactly the stationary (ergodic) edge flow
of the chain.  Equivalently,
\[
  C=\diag(\pi)(I-P).
\]
The matrix \(P\) is similar to \(D^{-1/2}WD^{-1/2}\), so its only positive
eigenvalue is the eigenvalue \(1\) on constants.  Write
\(\langle f,g\rangle_\pi=\sum_i\pi_i f_i g_i\).

Since \(Cv=\1_a-\1_b\), we have
\[
  (I-P)v=\eta,
  \qquad
  \eta=\frac{\1_a}{\pi_a}-\frac{\1_b}{\pi_b}.
\]
Define the flow-gradient seminorm
\[
  \norm{f}_{\nabla,1}
  =
  \sum_{i,j}\pi_iP_{ij}\abs{f_i-f_j}.
\]
The desired bound is equivalent to \(\norm{v}_{\nabla,1}\le 6\), because
\[
  \norm{v}_{\nabla,1}
  =
  \sum_{i,j}c_{ij}\abs{v_i-v_j}
  =
  2\sum_{i<j}c_{ij}\abs{v_i-v_j}.
\]
Replacing \(v\) by \(v+\alpha\1\) does not change either the equation or the
seminorm, so assume \(\sum_i\pi_i v_i=0\).  Put \(u=Pv\).  Then \(v=\eta+u\),
and the triangle inequality gives
\[
  \norm{v}_{\nabla,1}
  \le
  \norm{\eta}_{\nabla,1}+\norm{u}_{\nabla,1}.
\]
The first term is bounded by
\[
  \norm{\eta}_{\nabla,1}
  \le
  \sum_{i,j}\pi_iP_{ij}\parens*{\abs{\eta_i}+\abs{\eta_j}}
  =
  2\sum_i\pi_i\abs{\eta_i}
  =
  4.
\]

It remains to show \(\norm{u}_{\nabla,1}\le2\).  Consider the matrix
\[
  Q_{ij}=\pi_i\pi_j-\pi_iP_{ij}.
\]
We first note that \(Q\succeq0\).  Indeed, for any \(x\in\R^n\), let
\(\bar x=\sum_i \pi_i x_i\) and \(y=x-\bar x\1\).  Then
\[
  x^\top Qx
  =
  \bar x^2-\langle x,Px\rangle_\pi
  =
  -\langle y,Py\rangle_\pi
  \ge0,
\]
because \(P\) is nonpositive on the mean-zero subspace.  Thus there are vectors
\(z_i\) such that \(Q_{ij}=\langle z_i,z_j\rangle\).  Since
\(\sum_jQ_{ij}=0\), the vector \(\sum_jz_j\) is orthogonal to every \(z_i\).
It lies in their span, so \(\sum_i z_i=0\).  Define
\[
  r_i=\frac{z_i}{\pi_i}.
\]
Since \(P_{ii}=0\), the diagonal entries give
\(\norm{r_i}^2=Q_{ii}/\pi_i^2=1\).  Thus
\(\sum_i\pi_i r_i=0\), \(\norm{r_i}=1\), and
\[
  \pi_iP_{ij}=\pi_i\pi_j\parens*{1-\langle r_i,r_j\rangle}.
\]

Put
\[
  \ell=\sum_j\pi_j v_j r_j.
\]
Since \(v\) has \(\pi\)-mean zero,
\[
  u_i=(Pv)_i
  =
  \sum_j\pi_j\parens*{1-\langle r_i,r_j\rangle}v_j
  =
  -\langle r_i,\ell\rangle.
\]
Therefore
\[
  \norm{u}_{\nabla,1}
  =
  \sum_{i,j}\pi_i\pi_j\parens*{1-\langle r_i,r_j\rangle}
  \abs{\langle \ell,r_i-r_j\rangle}.
\]
By Cauchy--Schwarz, using
\(\sum_{i,j}\pi_i\pi_j(1-\langle r_i,r_j\rangle)=1\), which follows from
\(\sum_i\pi_i r_i=0\) and \(\norm{r_i}=1\), we get
\[
  \norm{u}_{\nabla,1}^2
  \le
  \sum_{i,j}\pi_i\pi_j\parens*{1-\langle r_i,r_j\rangle}
  \langle \ell,r_i-r_j\rangle^2.
\]
Let
\[
  \Sigma=\sum_i\pi_i r_i r_i^\top.
\]
For every vector \(x\),
\[
  x^\top\Sigma x
  =
  \sum_i\pi_i\langle x,r_i\rangle^2
  \le
  \sum_i\pi_i\norm{x}^2
  =
  \norm{x}^2,
\]
so \(0\preceq\Sigma\preceq I\).  Next write
\(a_i=\langle\ell,r_i\rangle\).  Then
\(\sum_i\pi_i a_i=0\), because \(\sum_i\pi_i r_i=0\), and
\(\ell^\top\Sigma\ell=\sum_i\pi_i a_i^2\).  Hence
\[
  \sum_{i,j}\pi_i\pi_j
  \langle \ell,r_i-r_j\rangle^2
  =
  \sum_{i,j}\pi_i\pi_j(a_i-a_j)^2
  =
  2\sum_i\pi_i a_i^2
  =
  2\ell^\top\Sigma\ell.
\]
Since \(r_i\) and \(r_j\) are unit vectors,
\(1-\langle r_i,r_j\rangle\le2\).  Therefore
\[
  \sum_{i,j}\pi_i\pi_j\parens*{1-\langle r_i,r_j\rangle}
  \langle \ell,r_i-r_j\rangle^2
  \le
  4\ell^\top\Sigma\ell.
\]
We claim that \(\ell^\top\Sigma\ell\le1\).  From \(v=\eta+u\),
\[
  \ell
  =
  \sum_i\pi_i \eta_i r_i+\sum_i\pi_i u_i r_i
  =
  r_a-r_b-\Sigma\ell.
\]
Thus \(\ell=(I+\Sigma)^{-1}(r_a-r_b)\), and hence
\[
  \ell^\top\Sigma\ell
  =
  (r_a-r_b)^\top(I+\Sigma)^{-1}\Sigma(I+\Sigma)^{-1}(r_a-r_b).
\]
Since \(\Sigma\succeq0\), its eigenvalues are nonnegative.  For every
\(\lambda\ge0\), \(\lambda/(1+\lambda)^2\le1/4\).  Therefore
\[
  (I+\Sigma)^{-1}\Sigma(I+\Sigma)^{-1}\preceq\frac14 I.
\]
Since \(\norm{r_i}=1\), \(\norm{r_a-r_b}\le2\), so
\[
  \ell^\top\Sigma\ell\le1.
\]
It follows that \(\norm{u}_{\nabla,1}\le2\), and therefore
\[
  \norm{v}_{\nabla,1}\le4+2=6.
\]
The equivalence above gives
\(\sum_{i<j}c_{ij}\abs{v_i-v_j}\le3\).
\end{proof}

\begin{corollary}[Arbitrary currents]\label{cor:arbitrary-current}
Let \(C=C_{[n],k}\) be the covariance matrix of a nondegenerate external-field
\(k\)-subset measure, with \(0<k<n\) and no deterministic coordinates, and
assume \cref{conj:covariance-spectral} holds for this measure.  For every
\(h\),
\[
  \sum_{i<j}c_{ij}\abs{h_i-h_j}
  \le
  \frac32\norm{Ch}_1.
\]
\end{corollary}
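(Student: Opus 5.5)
Set \(z=Ch\) and decompose the zero-sum vector \(z\) into elementary unit transfers, then apply \cref{thm:unit-current} to each transfer and use the triangle inequality for the gradient seminorm.

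\textbf{Checking the hypotheses of \cref{thm:unit-current}.} By \cref{conj:covariance-spectral}, \(C\) is a Laplacian: \(C\1=0\) and \(c_{ij}\ge0\). Since no coordinate is deterministic, every \(d_i=C_{ii}>0\), so \(D^{-1/2}\) is defined. Item 3 says \(B=D^{-1/2}CD^{-1/2}\) has a simple zero eigenvalue and all other eigenvalues in \([1,2]\). Then
\[
  D^{-1/2}WD^{-1/2}=I-B
\]
has the eigenvalue \(1\) exactly once and all other eigenvalues in \([-1,0]\). Simplicity of the zero eigenvalue of \(B\) means the weighted graph is connected. So \cref{thm:unit-current} applies to \(C\) for every pair of distinct vertices \(a,b\).

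\textbf{Decomposing the current.} The vector \(z\) sums to zero because \(\1^\top C=0\). Write \(z=z^+-z^-\) with \(\sum_i z_i^+=\sum_i z_i^-=\tfrac12\norm{z}_1\). A standard transportation (northwest-corner) coupling gives nonnegative \(w_{ab}\) with
\[
  z=\sum_{a,b} w_{ab}(\1_a-\1_b),
  \qquad
  \sum_{a,b}w_{ab}=\tfrac12\norm{z}_1 .
\]
Only pairs with \(a\ne b\) occur, since \(a\) ranges over the support of \(z^+\) and \(b\) over the support of \(z^-\).

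For each such pair, let \(v^{ab}\) solve \(Cv^{ab}=\1_a-\1_b\). Then \(v:=\sum_{a,b}w_{ab}v^{ab}\) satisfies \(Cv=z=Ch\). Since \(\ker C=\operatorname{span}(\1)\), \(h\) and \(v\) differ by a multiple of \(\1\), which does not change \(\sum_{i<j}c_{ij}\abs{h_i-h_j}\).

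\textbf{Conclusion.} The map \(f\mapsto\sum_{i<j}c_{ij}\abs{f_i-f_j}\) is a seminorm, so the triangle inequality and \cref{thm:unit-current} give
\[
  \sum_{i<j}c_{ij}\abs{h_i-h_j}\le \sum_{a,b}w_{ab}\cdot 3=\tfrac32\norm{Ch}_1 .
\]

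The only delicate point is verifying the hypotheses of \cref{thm:unit-current}: connectivity and the translation of item 3 into nonpositivity of the nontrivial normalized adjacency spectrum. Both follow from the simple zero eigenvalue and the lower bound \(1\) on the remaining eigenvalues of \(B\). The decomposition step and the kernel argument are routine.
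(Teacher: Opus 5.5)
Your proposal is correct and follows the paper's proof essentially verbatim: you decompose \(Ch\) into elementary transfers \(\1_a-\1_b\) of total mass \(\tfrac12\norm{Ch}_1\), apply \cref{thm:unit-current} to each, note that \(h\) equals the superposition modulo \(\1\), and conclude by the seminorm triangle inequality. Your explicit checks that the simple zero eigenvalue of \(B\) gives connectivity (hence \(\ker C=\operatorname{span}(\1)\)) and that \(I-B\) is the normalized adjacency with nonpositive nontrivial spectrum are a welcome detail that the paper leaves implicit.
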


\begin{proof}
Let \(g=Ch\).  Since \(C\1=0\), \(\sum_i g_i=0\).  Decompose \(g\) into
elementary transfers
\[
  g=\sum_{a,b}\gamma_{ab}(\1_a-\1_b),
  \qquad
  \gamma_{ab}\ge0,
  \qquad
  \sum_{a,b}\gamma_{ab}=\frac12\norm{g}_1.
\]
For instance, send the positive part of \(g\) to the negative part.  For each
pair with \(\gamma_{ab}>0\), choose \(v_{ab}\) satisfying
\(Cv_{ab}=\1_a-\1_b\).  The two sides of the next display both solve \(Cu=g\),
so they differ by a multiple of \(\1\):
\[
  h=\sum_{a,b}\gamma_{ab}v_{ab}
  \qquad\text{modulo multiples of \(\1\)}.
\]
The map \(\Phi(u)=\sum_{i<j}c_{ij}\abs{u_i-u_j}\) is a seminorm, and
\cref{thm:unit-current} applies by \cref{conj:covariance-spectral}.  Thus
\[
  \Phi(h)
  \le
  \sum_{a,b}\gamma_{ab}
  \Phi(v_{ab})
  \le
  3\sum_{a,b}\gamma_{ab}
  =
  \frac32\norm{Ch}_1.
\]
\end{proof}

\section{Proof of the Conditional Main Theorems}\label{sec:proof-main}

\begin{proof}[Proof of \cref{thm:main}]
The case \(k=n\) is deterministic, so assume \(0<k<n\).  Work under the
hypothesis in the statement of \cref{thm:main}.
Let \(x\in\relint(\Delta_{n,k}^{=})\), and let \(h\) be an infinitesimal
external-field perturbation at \(x\).  Combining \cref{lem:dynamic-static} and
\cref{cor:arbitrary-current} gives
\[
  \mathsf V(h)
  \le
  2\sum_{i<j}c_{ij}\abs{h_i-h_j}
  \le
  3\norm{Ch}_1.
\]
By \cref{lem:infinitesimal-reduction}, the infinitesimal stretch is at most
\(6\).  Integrating along the line segment from \(x\) to \(y\) proves
\[
  \E*{\abs{A(x)\triangle A(y)}}\le 6\norm{x-y}_1
\]
for relative-interior points.  Boundary points follow by deleting forced-zero
coordinates, contracting forced-one coordinates, and taking limits.
\end{proof}

\begin{proof}[Proof of \cref{cor:at-most}]
For \(x\in\Delta_{n,k}^{\le}\), let
\[
  \sigma(x)=k-\sum_{i=1}^n x_i.
\]
Encode the slack using \(k\) dummy coordinates by the monotone prefix vector
\(d(\sigma)\in[0,1]^k\): the first \(\lfloor\sigma\rfloor\) entries are \(1\),
the next entry is the fractional part of \(\sigma\), and the remaining entries
are \(0\).  Then
\[
  \sum_{\ell=1}^k d_\ell(\sigma)=\sigma,
  \qquad
  \norm{d(\sigma)-d(\tau)}_1=\abs{\sigma-\tau}.
\]
The extended vector
\[
  \widehat x=(x,d(\sigma(x)))
  \]
lies in \(\Delta_{n+k,k}^{=}\).  Apply \cref{thm:main} to the exact
hypersimplex on the real and dummy coordinates, and delete the dummy
coordinates from the rounded set.  Real marginals are preserved, and the real
symmetric difference is at most the full symmetric difference.  Therefore
\[
  \E*{\abs{A(x)\triangle A(y)}}
  \le
  6\norm{\widehat x-\widehat y}_1
  =
  6\parens*{\norm{x-y}_1+\abs{\sigma(x)-\sigma(y)}}.
\]
Since
\[
  \abs{\sigma(x)-\sigma(y)}
  =
  \abs{\sum_i(y_i-x_i)}
  \le
  \norm{x-y}_1,
\]
the stretch is at most \(12\).
\end{proof}

\section*{Acknowledgments}

We thank Aniket Das for helpful discussions.  We thank Aram Harrow for
pointing out the error in an earlier version's proof of the covariance
spectral assertion now stated as \cref{conj:covariance-spectral}.

\appendix

\section{Provenance}\label{app:provenance}

This manuscript used GPT 5.5 Pro Extended and Codex in a substantive way.  The
authors proposed the maximum-entropy distribution with prescribed hypersimplex
marginals, the sequential random-order sampler, and the infinitesimal
perturbation viewpoint.

A previous version claimed an unconditional proof of the constant bound.  That
version included an incorrect proof of the covariance spectral assertion now
stated as \cref{conj:covariance-spectral}.  The present manuscript keeps the
variance-drop reduction, the static \(L^1\) current estimate, and the
conditional implication from the spectral hypothesis to constant stretch.
Codex was used to help assemble the manuscript, add surrounding exposition and
references, and typeset the draft under the authors' supervision.

\PrintBibliography

\end{document}